\long\def\gobbleone#1{}
\protected\def\atan{\futurelet\tmptoken\doatan}
\protected\def\doatan{\operatorname{atan\ifx\tmptoken22\fi}%
  \ifx\tmptoken2\expandafter\gobbleone\fi}
\protected\def\arctan{\futurelet\tmptoken\doarctan}
\protected\def\doarctan{\operatorname{arc\,tan\ifx\tmptoken22\fi}%
  \ifx\tmptoken2\expandafter\gobbleone\fi}
\newcommand{\keywords}[1]{\par\addvspace\baselineskip
\noindent\keywordname\enspace\ignorespaces#1}
\begin{document}

\mainmatter 

\title{Improved Mesh Processing using Distorted Pole Spherical Coordinates}

\titlerunning{Improved Mesh Processing using Distorted Pole Spherical Coordinates}

\author{Grzegorz Borowik$^{1,2}$
        \and Micha\l{} Balicki$^{1}$
	\and Micha\l{} Kasprzak$^{1}$
        \and Piotr Cukier$^{1}$
}

\authorrunning{G. Borowik, M. Balicki, M. Kasprzak, and P. Cukier
}

\institute{
$^{1}$
TrichoLAB LLC, Orzycka 27, 02-695 Warsaw, Poland\\
$^{2}$
University of Technology and Economics, Warsaw, Poland
\mailsa, \mailsb, \mailsc, \mailsd\\
ORCID: 0000-0003-4148-4817\\
}

\maketitle

\begin{abstract}
The Cartesian coordinate system is the most commonly used system in computer visualization.
This is due to its ease of use and processing speed. However, it is not always suitable for a given problem. Angular measures often allow us to operate more efficiently on a three-dimensional model.
When dealing with issues related to the processing of three-dimensional objects described using polygon meshes, it often happens that these standard systems do not satisfy specific properties that are crucial to us.
The topic of the paper is to discuss a specific transformation to spherical coordinates with distorted poles, which allows us to eliminate singular points from the determined subset of the mesh and bypass inconvenient seam lines in its two-dimensional projection, which can hinder further calculations.

\keywords{Cartesian coordinate system,
Angular measures,
Spherical coordinates,
Star-shaped polygon meshes}
\end{abstract}

\section{Introduction}
Polygon meshes are commonly used to represent 3D objects in computer graphics applications. The mesh is composed of a set of vertices, edges, and faces that define the shape of the object. The vertices are represented by their spatial coordinates in 3D space, which are typically specified in a Cartesian coordinate system. However, Cartesian coordinates may not always be the best choice for certain applications. This paper discusses an alternative coordinate system that is better suited for some specific problems.

While Cartesian coordinates are simple to use and process quickly, they have some limitations and are not always suitable for describing certain types of shapes, such as spherical or cylindrical shapes. A limitation is, for example, that they may not be able to preserve some of the important geometric properties of the original shape.

On the other hand, angular measurements can provide a more efficient way to manipulate 3D models. For example, when dealing with spherical shapes, it is often more convenient to use spherical coordinates instead of Cartesian coordinates. Moreover, using spherical coordinates avoids the need for complex trigonometric calculations, and can simplify various transformations. 

When working with polygon meshes, it is not uncommon to encounter singular points. Singular points are points where the mesh is distorted, and the geometry of the surrounding points cannot be well defined. Singular points can cause problems in many types of mesh processing operations, such as subdivision or smoothing.

This paper discusses a specific transformation to spherical coordinates that can eliminate singular points from a subset of the mesh. The transformation involves introducing a distortion to the poles of the spherical coordinate system, which effectively removes the singular points. The transformed mesh can then be represented using a regular spherical coordinate system, which is more suitable for further processing.

Another problem with polygon meshes is that their two-dimensional projections may have inconvenient seam lines. Seam lines are lines that mark the boundary between two parts of the mesh that are mapped to the same area on the two-dimensional surface. Seam lines can cause problems in texture mapping or other types of surface analysis. This paper presents a method for avoiding seam lines by carefully mapping the surface of the mesh to a two-dimensional theta-phi space.

In summary, this paper presents a novel application of spherical transformation with distorted poles. The method enables easy control over the singular points and avoids seam lines, by granting the ability to move them out of the domain of interest. The proposed technique can be useful for various applications in computer graphics involving star-shaped polygon meshes. 

\section{Related work}
In recent years, several techniques have been developed to overcome challenges in representing and manipulating three-dimensional data in computer graphics. Taylor et al. proposed a method that uses standard computer graphics techniques to avoid issues with three-dimensional data in arbitrary coordinate systems~\cite{Taylor2017Visualizing}.

One common approach for representing three-dimensional objects is to use mesh data structures. Hoppe et al. introduced a wedge-based mesh data structure that captures attribute discontinuities efficiently and allows for simultaneous optimization of multiple attribute vectors per vertex~\cite{HoppeNew}. Meanwhile, Johnson et al. proposed a sequence of local edge operations that promote uniform edge lengths while preserving mesh shape~\cite{Johnson1998Control}.

Cartesian coordinate systems have also been used in computer graphics to simplify trigonometry. For instance, Deal recommended a particular Cartesian coordinate system of adding lenses~\cite{DEAL1993Recommended}.

Barycentric coordinates have been widely used in planar parameterization, but Gotsman et al. generalized this method to solve the spherical parameterization problem~\cite{Gotsman2003Fundamentals}. Meanwhile, Kahler discusses the use of triangle meshes to represent geometric surfaces, which are simple but require a large number of triangles to represent a smooth surface. To address this, mesh simplification algorithms have been developed to reduce the number of triangles while approximating the initial mesh~\cite{KahlerEfficient}.

Mesh representation has been another area of active research in computer graphics. Birthelmer et al. proposed a method of storing vertices and topology of meshes independently of each other~\cite{Birthelmer2003Efficient}. Lipman et al. introduced a linear least squares system that can be solved quickly enough for interactive response time~\cite{LipmanDifferential}. They also proposed a rigid motion invariant mesh representation based on discrete forms defined on the mesh~\cite{Lipman2005Linear}, which is unique to a rigid transformation of the mesh.

A variety of other methods have been proposed in computer graphics to address challenges in representing and manipulating three-dimensional data, ranging from mesh data structures to coordinate systems and parameterization techniques.

Especially, \cite{CoordinateTransformationonaSphereUsingConformalMapping} describes an analytical reversible coordinate transformation on a sphere that can overcome the problem of using traditional spherical coordinates that have a singularity at the pole, which is important for setting up global ocean circulation models that include the Arctic Ocean. The new transformation preserves the orthogonality and the shape of infinitesimal figures and can map the North and South Poles to two arbitrary locations on the Earth using conformal mapping in the extended complex plane. The resulting coordinate system has enhanced resolution along the geodesic curve between the new poles, and the transformation has been used to increase the resolution in a specific area of interest in the model domain. This concept has been used to build our improved mesh processing method.

\section{Fundamentals}
In the context of studying spherical mappings and projections, it is useful to define the notion of a star-shaped polygon mesh. This concept plays a key role in the analysis of certain geometric structures and their properties. We provide the following definition:

\begin{definition} 
{\upshape Let $f : \mathbb{R}^3 \longrightarrow [0, \pi] \times [-\pi, \pi] \times \mathbb{R}$ be a spherical mapping. Let~$g$ be a projection on the theta-phi space.
Then mesh $\mathcal{M} \subseteq \mathbb{R}^3$ is called {\itshape star-shaped polygon mesh} if and only if 
$\forall_{x, y \in \mathcal{M}} \; g(f(x)) = g(f(y)) \iff x = y$.}
\end{definition}

The methodology considers exclusively star-shaped polygon meshes as they imply the unique representation of the points in the theta-phi space, allowing for further processing in the two-dimensional space. Our mapping is performed using the standard spherical mapping. Namely for $f(x, y, z) = (\theta, \phi, r)$ we have:

\begin{equation}\label{cartesian_spherical}
    \theta  =  \atan2(\sqrt{x^2 + y^2}, z), \ \
    \phi  =  \atan2(y, x), \ \
    r  =  \sqrt{x^2 + y^2 + z^2}
\end{equation}
\ \\

The inverse transformation: $f^{-1}(\theta, \phi, r) = (x, y, z)$, given by (\ref{spherical_cartesian}), is also a~well known mapping. Both transformations enable us to freely switch between the two coordinate systems, which will be useful in later considerations.
\begin{equation}\label{spherical_cartesian}
\begin{array}{rcl}
    x  = r \cos(\theta) \sin(\phi),\ \    y  =  r \sin(\theta) \sin(\phi),\ \   z  =  r \cos(\phi)
\end{array}
\end{equation}

The formula (\ref{main_transformation}) is a combination of the stereographic polar transformation composed with the conformal mapping which is responsible for warping the poles of the sphere, bringing them closer to each other. Due to the nature of transformations circles are mapped to circles, preserving the longitudes and latitudes of the sphere. For $\sigma = 0.3$ we obtain the transformation visualized in Fig.~\ref{fig:stereographicProj} for 2D and in Fig.~\ref{fig:3DWarpedPoles} for 3D.

\begin{equation}\label{main_transformation}
    \hat{x} = x_\sigma * \delta,\ \
    \hat{y} = 1 - \delta,\ \
    \hat{z} = z_\sigma * \delta,
\end{equation}
where
$\delta = \displaystyle\frac{2} { (1 + x_\sigma^2 + z_\sigma^2)}$, and $x_\sigma = \displaystyle\frac{\sigma x }{ 1 - y}$, $z_\sigma = \displaystyle\frac{\sigma z }{ 1 - y}.$

\begin{figure}[h]
    \centering
    \includegraphics[width=0.95\textwidth]{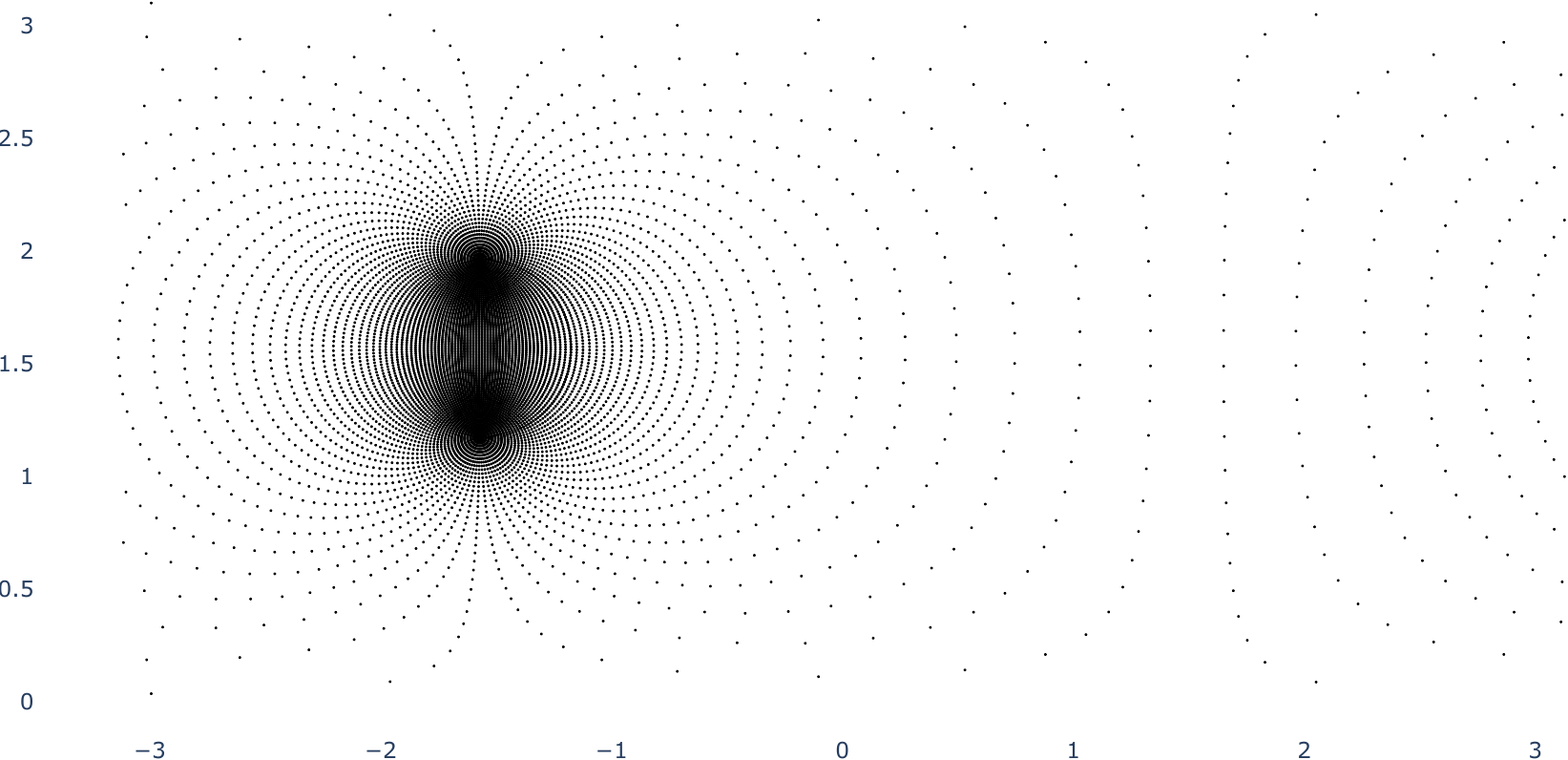}
    \caption{2D visualization of the spherical warp (theta-phi)}
    \label{fig:stereographicProj}
\ \\
\ \\
    \centering
    \includegraphics[width=0.3\textwidth]{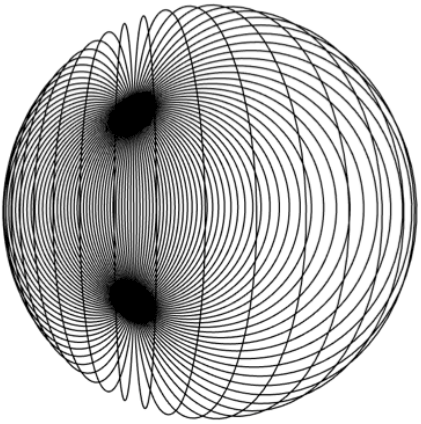}
    \caption{3D plot of the sphere with wrapped poles}
    \label{fig:3DWarpedPoles}
\end{figure}

\section{Methodology}
Mapping the two-dimensional sub-area of the surface of a star-shaped model can be described by the composition of the provided transformations. However, before applying those mappings some prerequisites need to be met.

First, the origin of the system, i.e. point $[0,0,0]$ in the Cartesian coordinates, must be the point relative to which the model is star-shaped. Sometimes it might be necessary to translate the coordinate system to that position.

Second, it is important to rotate the model so that the poles are relatively close to the boundaries of the two-dimensional sub-area. This ensures an accurate representation of the geometric structures of the mesh after transformation. For our purposes, we constrain these rotations to work with the natural symmetry of the 3D model. However, in general, this is not required.

The distortion task uses (\ref{main_transformation}) to warp the poles and then (\ref{cartesian_spherical}) to transform the result to the spherical coordinate system. The distortion parameter should be chosen carefully. Keeping it too small or too large makes the two-dimensional representation less accurate.

Sometimes, it might be useful to go back to the standard spherical space. In such a scenario, we can use (\ref{spherical_cartesian}) and (\ref{main_transformation}) with the inverse of the distortion parameter $\sigma$, and (\ref{cartesian_spherical}) to return to the original theta-phi space. The inverse transformation is the result of the Theorem~\ref{inv-therem}.
\begin{theorem}\label{inv-therem} 
Let $f(x, y, z, \sigma) = (\hat{x}, \hat{y}, \hat{z})$ be a cartesian warp transformation described by (\ref{main_transformation}). Then, 
\begin{equation}
    \forall_{\sigma \in \mathbb{R}} \forall_{(x,y,z) \in K((0,0), 1)} \; f(f(x, y, z, \sigma), \frac{1}{\sigma}) = (x, y, z)
\end{equation}
\end{theorem}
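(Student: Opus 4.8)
The plan is to recognize the warp (\ref{main_transformation}) as a planar dilation conjugated by a stereographic projection, and then to confirm the claimed identity by a short direct computation in which the relevant cancellations are visible. Write $P$ for the stereographic projection of the unit sphere from the pole $(0,1,0)$, namely $P(x,y,z) = \bigl(\tfrac{x}{1-y}, \tfrac{z}{1-y}\bigr)$, and $S_\lambda$ for the planar dilation $(u,w) \mapsto (\lambda u, \lambda w)$. Reading off the formulas, $(x_\sigma, z_\sigma)$ is exactly $S_\sigma(P(x,y,z))$, while the map $(u,w) \mapsto (u\delta,\, 1-\delta,\, w\delta)$ with $\delta = 2/(1+u^2+w^2)$ is the inverse stereographic projection $P^{-1}$; hence $f(\cdot,\sigma) = P^{-1}\circ S_\sigma \circ P$. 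Since $P^{-1}\circ P$ is the identity on the sphere minus the pole and $S_{1/\sigma}\circ S_\sigma$ is the identity whenever $\sigma \neq 0$, we obtain $f(f(\cdot,\sigma),\tfrac1\sigma) = P^{-1}\circ S_{1/\sigma}\circ P\circ P^{-1}\circ S_\sigma\circ P = \mathrm{id}$. In particular this exposes the tacit hypothesis $\sigma \neq 0$ that the statement must carry.

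For a self-contained proof I would dispense with properties of $P$ and compute directly. First note that the inner application is defined only away from the pole, so $y \neq 1$ is required, and that its output has second coordinate $\hat y = 1 - \delta$ with $\delta = 2/(1 + x_\sigma^2 + z_\sigma^2) > 0$; thus $1 - \hat y = \delta \neq 0$, so the intermediate point never lands on the pole and the outer application of $f$ is legitimate. Applying (\ref{main_transformation}) with parameter $1/\sigma$ to $(\hat x,\hat y,\hat z)$, the rescaled planar coordinates are
\[
\frac{(1/\sigma)\,\hat x}{1-\hat y} \;=\; \frac{(1/\sigma)\,x_\sigma\,\delta}{\delta} \;=\; \frac{x_\sigma}{\sigma} \;=\; \frac{x}{1-y}, \qquad \frac{(1/\sigma)\,\hat z}{1-\hat y} \;=\; \frac{z}{1-y},
\]
where the $\delta$ factors cancel precisely because $1 - \hat y = \delta$; this cancellation is the crux of the argument.

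Then I would evaluate the final $\delta$. Using the hypothesis $(x,y,z) \in K((0,0),1)$, i.e.\ $x^2+y^2+z^2 = 1$, we have $x^2+z^2 = 1-y^2 = (1-y)(1+y)$, so the squared norm of the rescaled planar point equals $\tfrac{x^2+z^2}{(1-y)^2} = \tfrac{1+y}{1-y}$, and hence the new $\delta$ is $2 \big/ \bigl(1 + \tfrac{1+y}{1-y}\bigr) = 1-y$. Substituting back yields second coordinate $1-(1-y) = y$, first coordinate $\tfrac{x}{1-y}(1-y) = x$, and third coordinate $z$, which is the desired identity.

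I expect the only genuine difficulty to be bookkeeping rather than algebra: fixing the correct domain — the unit sphere, with the pole $y=1$ removed and with $\sigma \neq 0$ — and verifying that the composition is well defined, in particular that the point produced by the first warp always satisfies $\hat y < 1$. I would also be careful to flag where the constraint $x^2+y^2+z^2 = 1$ enters, since the identity genuinely fails for points off the unit sphere.
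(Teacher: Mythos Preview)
Your argument is correct and is in fact cleaner than the paper's. The paper proceeds by brute-force expansion: it writes out $\hat x,\hat y,\hat z$ as rational functions of $x,y,z,\sigma$, substitutes these into the same formulas with parameter $1/\sigma$, and arrives at expressions such as $\dot x = \tfrac{-2x(y-1)}{x^2+y^2+z^2-2y+1}$, leaving to the reader the final step of invoking $x^2+y^2+z^2=1$ to collapse the denominator to $-2(y-1)$. Your route is organised around the decomposition $f(\cdot,\sigma)=P^{-1}\circ S_\sigma\circ P$, which makes the involution property conceptually transparent and reduces the explicit computation to a single clean cancellation (the $\delta$ in $1-\hat y=\delta$). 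You also make explicit what the paper leaves tacit: the restrictions $\sigma\neq 0$ and $y\neq 1$, the fact that the first image never hits the pole, and the precise place where the unit-sphere constraint is used. The trade-off is that the paper's purely algebraic expansion requires no geometric interpretation at all, whereas your argument asks the reader to accept (or verify) that the formula for $(\hat x,\hat y,\hat z)$ in terms of $(x_\sigma,z_\sigma)$ really is inverse stereographic projection; you mitigate this by also supplying the self-contained direct check.
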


\begin{proof}
From (\ref{main_transformation}) we get:
\begin{eqnarray*}
    \hat{x} &=&  \frac{-2\sigma  x( y - 1)}{\sigma^2 x^2 + \sigma^2  z^2 + y ^2 - 2  y + 1} \\
    \hat{y} &=& \frac{\sigma ^2  x^2 + \sigma ^2  z ^2 - y ^2 + 2  y - 1} {\sigma ^2  x ^2 + \sigma ^2  z^2 + y ^2 - 2  y + 1} \\
    \hat{z} &=& \frac{-2 \sigma z (y - 1)}{\sigma ^2 x^2 + \sigma ^2  z ^2 + y ^2 - 2  y + 1}
\end{eqnarray*}
then,
\begin{eqnarray*}
\displaystyle \dot{x} =  \frac{-\frac{2}{\sigma}  \hat{x}(\hat{y} - 1)}  {\frac{1}{\sigma^2}    \hat{x}^2 + \frac{1}{\sigma^2}    \hat{z}^2 + \hat{y}^2  - 2  \hat{y} + 1} &=& \frac{-2x(y - 1)}{x^2 + y^2 + z^2 - 2y + 1}\\
\displaystyle \dot{y} = \frac{\frac{1}{\sigma^2}  \hat{x}^2 +  \frac{1}{\sigma^2}   \hat{z}^2 -  \hat{y}^2 + 2  \hat{y} - 1}  {\frac{1}{\sigma^2}   \hat{x}^2  + \frac{1}{\sigma^2}   \hat{z}^2  + \hat{y}^2  - 2  \hat{y} + 1} &=& \frac{x^2 - y^2 + z^2 + 2y - 1}{x^2 + y^2 + z^2 - 2y + 1}\\
\displaystyle \dot{z} =  \frac{-\frac{2}{\sigma}\hat{z}(  \hat{y} - 1)}    {\frac{1}{\sigma^2}    \hat{x}^2 + \frac{1}{\sigma^2} \hat{z}^2  +  \hat{y}^2 - 2  \hat{y} + 1} &=& \frac{-2z(y - 1)}{x^2 + y^2 + z^2 - 2y + 1}
\end{eqnarray*}
\qed
\end{proof}

\section{Alternative approach}
The approach given by (\ref{main_transformation}) is not the only possibility for achieving the desired results. We propose an alternative way of addressing the problem of distorting poles, which takes inspiration from observing the different plane projections of the previously described transformation. The main advantage is the ability to calculate $\theta$ and $\phi$ coordinates separately, unlike the described mapping. 

\begin{definition}
{\upshape Let $f_{\sigma}$ : $\mathbb{R}^3 \longrightarrow [\theta, \phi]$ such that:
\begin{equation}
     \theta = f_{1, \sigma}(x, y, z),\ \ \phi = f_{2, \sigma}(x, y, z).
\end{equation}
By $f_{\sigma}$ we denote a function that maps the Cartesian coordinates to the theta-phi space with distorted poles based on the $\sigma$ parameter. $f_{\sigma}$ consists of two coordinate-wise functions, which calculate $\theta$ and $\phi$ respectively in the new system.
} 
\end{definition}

Let's examine the $[p_x, p_y, p_z]$ coordinates of (\ref{main_transformation}) projected onto the $XY$ plane. The latitudes in spherical coordinates with distorted poles get mapped to ellipses for which $\theta$ angle is preserved (Fig.~\ref{fig:xy-projection}).

By taking the anticlockwise angle measurement between the vectors $[p_x -o_x, p_y - o_y]$ and $[1, 0]$ we get the $\theta$ parameter of our new transformation, where $[o_x, o_y]$ are the coordinates of the new pole. Since the distortion happens along the y-axis, $o_x = 0$, calculations are very efficient.

The difference between this measure and the approach given by (\ref{main_transformation}) is that we ignore the effect of $\phi$ on the latitudinal measure, which greatly simplifies the calculation.

\begin{theorem} 
By $f_{1, \sigma}: \mathbb{R}^3 \longrightarrow [0, 2\pi]$ let's denote the function that returns $\theta$ from spherical coordinates with distorted poles.
Let $O = [o_x, o_y, o_z]$ be a Cartesian representation of a distorted north pole, then:
\begin{equation}
\theta = f_{1, \sigma}([p_x, p_y, p_z]) = \operatorname{atan2}(p_y - oy, p_x)
\end{equation}
\end{theorem}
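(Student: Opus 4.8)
The plan is to recognise that the asserted identity is nothing but the closed form of the value $f_{1,\sigma}$ has just been constructed to return: one unwinds the geometric definition (the anticlockwise angle, measured in the $XY$-plane, subtended at the distorted pole) and identifies it with the two-argument arctangent. The only genuine computation involved is the remark that the distorted north pole lies in the plane $x=0$. I would carry this out in three short steps.

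First, I would record the defining property of $\operatorname{atan2}$: for every $[a,b]\neq[0,0]$ the value $\operatorname{atan2}(b,a)$ is the anticlockwise angle swept from the direction $[1,0]$ to the ray through $[a,b]$, read in the codomain $[0,2\pi]$ declared for $f_{1,\sigma}$ (this differs from the customary range $(-\pi,\pi]$ only by an additive $2\pi$ on the lower half-plane, which the stated codomain absorbs). Recalling that $[p_x,p_y,p_z]=(\hat x,\hat y,\hat z)$ is the warped point produced by (\ref{main_transformation}) and that $O$ is the warped pole, the construction above \emph{defines} $\theta=f_{1,\sigma}([p_x,p_y,p_z])$ to be exactly this anticlockwise angle, taken from $[1,0]$ to the planar vector $[p_x-o_x,\,p_y-o_y]$ obtained by projecting the segment from $O$ to $[p_x,p_y,p_z]$ onto the $XY$-plane. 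Hence, immediately from the definition,
\begin{equation*}
\theta=\operatorname{atan2}(p_y-o_y,\,p_x-o_x).
\end{equation*}

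Second, I would establish $o_x=0$. The pole $O$ is the image under (\ref{main_transformation}) of the north pole of the spherical system, i.e.\ of the point with $\theta=0$ in (\ref{cartesian_spherical}), namely $[0,0,1]$. Since the auxiliary quantity $x_\sigma=\sigma x/(1-y)$ occurring in (\ref{main_transformation}) vanishes whenever $x=0$, one gets $\hat x=x_\sigma\,\delta=0$ at that point, so $o_x=0$. (Equivalently, the whole plane $x=0$ is carried into itself by (\ref{main_transformation}); this is the precise content of the statement that the distortion acts along the $y$-axis.) Substituting $o_x=0$ into the formula of the first step gives $\theta=\operatorname{atan2}(p_y-o_y,\,p_x)$, the claimed identity.

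The only point requiring care --- and hence the ``main obstacle'', though it is a matter of bookkeeping rather than of mathematics --- is the reconciliation of the angular conventions: the branch and range of $\operatorname{atan2}$, the orientation in which the angle is ``measured anticlockwise'', and the declared codomain $[0,2\pi]$ must be made mutually consistent, and the degenerate configuration $p_x=0=p_y-o_y$ (the query point coinciding with the distorted pole, where $\theta$ is genuinely undefined, exactly as $\operatorname{atan2}(0,0)$ is) must be excluded. Once these conventions are pinned down, each of the two steps is immediate, so the proof amounts to a single-line identification.
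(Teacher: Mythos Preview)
Your proposal is correct and follows essentially the same line as the paper's proof: both identify $\theta$ as the anticlockwise angle from $[1,0]$ to $[p_x-o_x,\,p_y-o_y]$ and then use $o_x=0$ to reach $\operatorname{atan2}(p_y-o_y,\,p_x)$. The only cosmetic difference is that the paper spells out the identification by computing the dot product $[1,0]\cdot([p_x,p_y]-[o_x,o_y])=p_x$ and the $2\times 2$ determinant $p_y-o_y$, passing through the signed $\arccos$ form before collapsing to $\operatorname{atan2}$, whereas you invoke the geometric meaning of $\operatorname{atan2}$ directly; your explicit verification that the warped pole has $o_x=0$ (via the image of $[0,0,1]$ under (\ref{main_transformation})) is a welcome addition that the paper relegates to the surrounding prose.
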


\begin{proof}
$$\displaystyle [1,0] \cdot ([p_x, p_y] - [o_x, o_y]) = p_x - o_x = p_x$$
$$\displaystyle \begin{vmatrix} 1 & p_x - o_x\\0 & p_y - o_y \end{vmatrix} = p_y - o_y$$
$$\displaystyle \theta =  \arccos({\frac{p_x}{||[p_x, p_y] - [o_x, o_y]|| }})\operatorname{sgn}(p_y - o_y)$$
$$\displaystyle \theta = \arccos(\frac{p_x}{\sqrt{p_x^2 + (p_y - o_y)^2}})\operatorname{sgn}(p_y - o_y)$$
$$\displaystyle \theta = \operatorname{atan2}(p_y - oy, p_x)$$
 \qed
\end{proof}

Looking at the projection on the plane $YZ$ (Fig.~\ref{fig:yz-projection}) of the point $[p_x, p_y, p_z]$ from transformation (\ref{main_transformation}), we can see that the latitudes are projected onto chords of the unit circle which, as $\phi$ approaches the pole, converge to the point $(o_y, o_z)$.
By finding the chord in which the $[p_y, p_z]$ lies, we can evaluate the value of~$\phi$. The calculation of $\phi$ relies on the angular measure ratio between the points defining the chord and the corresponding angular measure of the warped poles.

\begin{figure}[ht]
\centering
\begin{minipage}{.5\textwidth}
  \centering
  \includegraphics[width=0.7\textwidth]{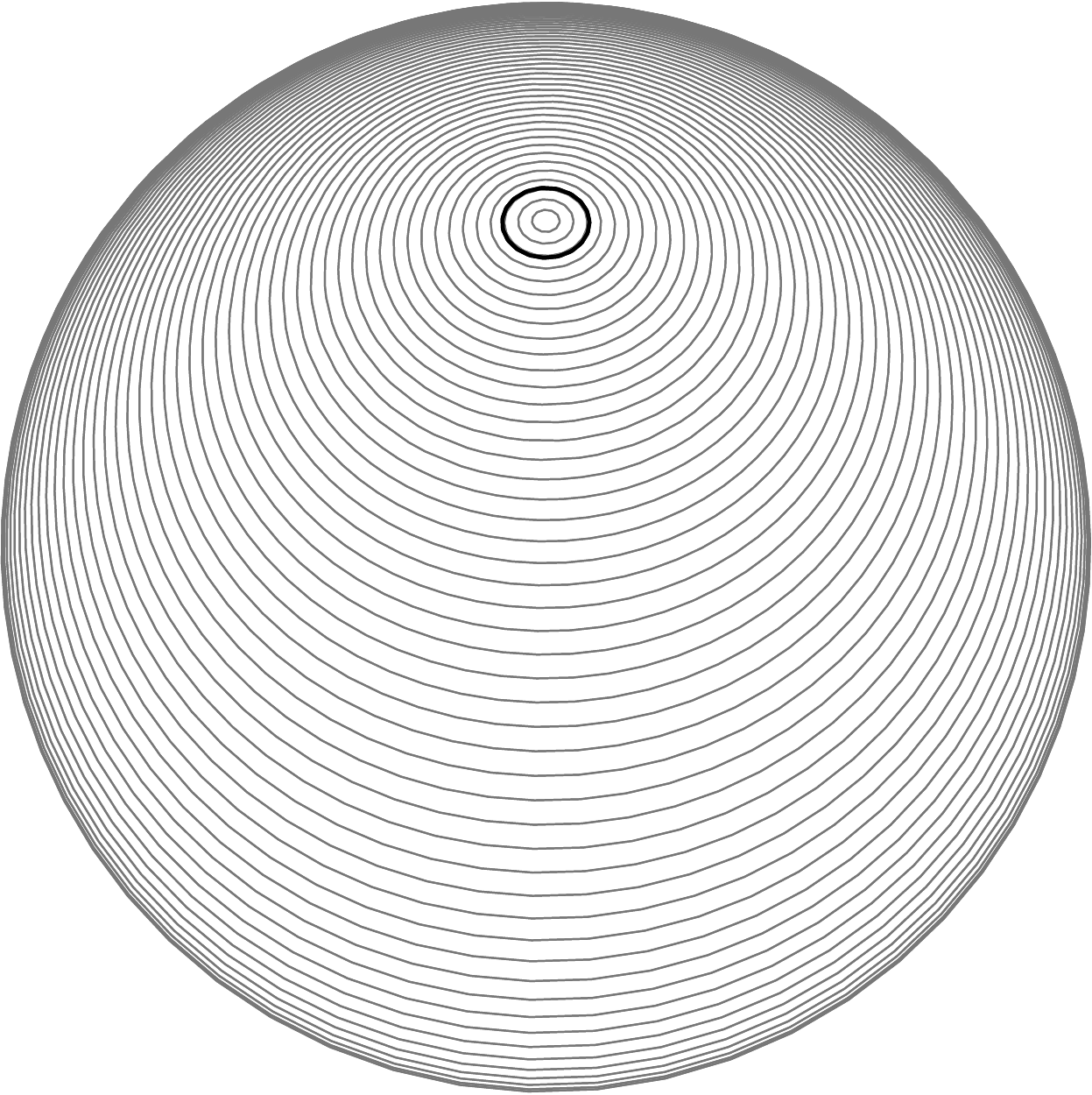}
    \caption{$XY$ projection}
    \label{fig:xy-projection}
\end{minipage}%
\begin{minipage}{.5\textwidth}
  \centering
  \includegraphics[width=0.7\textwidth]{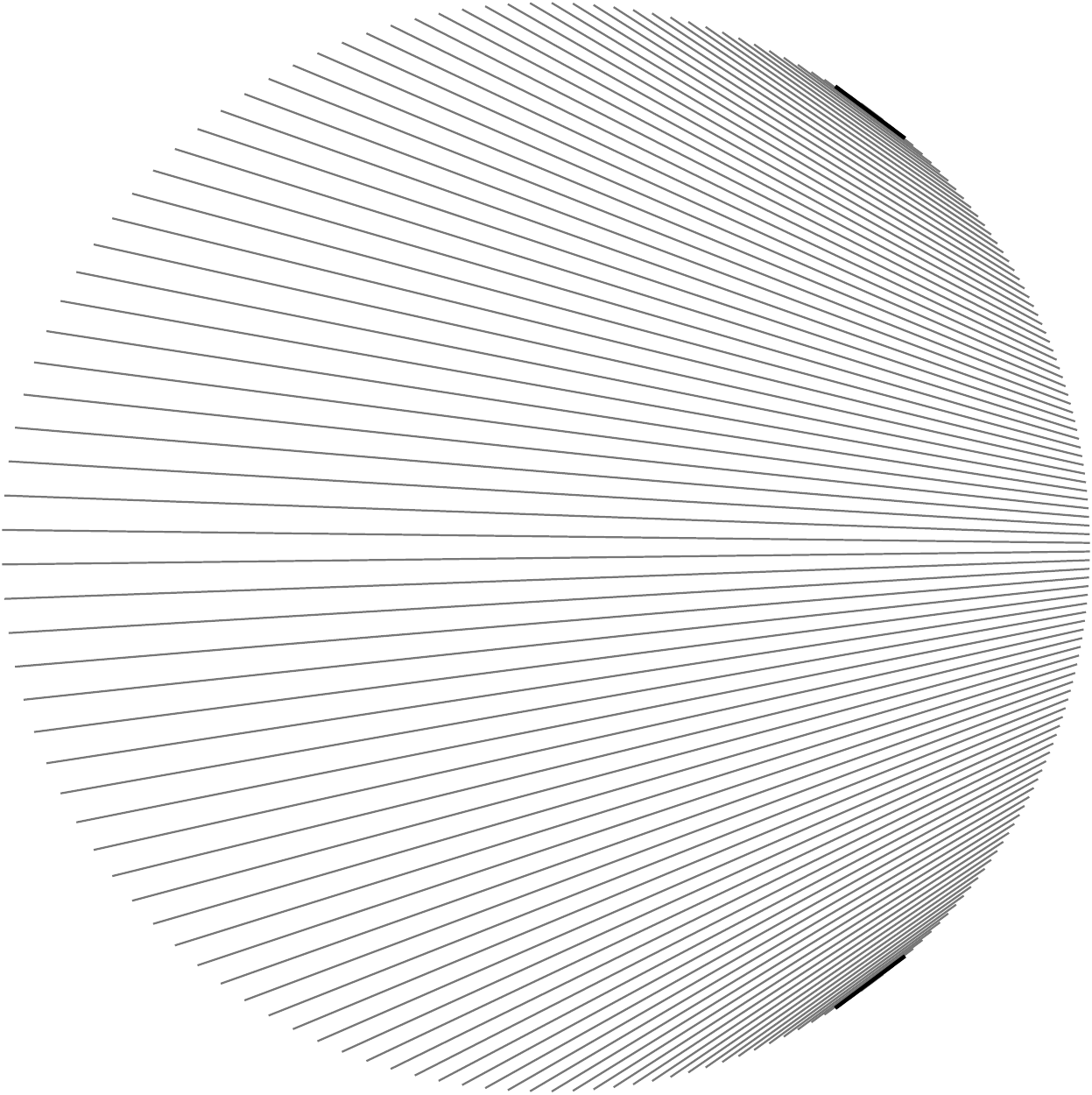}
    \caption{$YZ$ projection}
    \label{fig:yz-projection}
\end{minipage}
\end{figure}

\begin{theorem} 
By $f_{2, \sigma}: \mathbb{R}^3 \longrightarrow [0, \pi]$ let's denote the function that returns $\phi$ coordinate of a spherical coordinate system with distorted poles.\\
Let $g:[0;\pi] \longrightarrow \mathbb{R}^2$ be a function that returns the middle point of the chord corresponding to a given $\phi$ value.\\
Let $g_m:[0;\pi]\longrightarrow\mathbb{R}$ be a function that returns the slope parameter of the line spanned over the chord with the corresponding $\phi$ value. \\
Then the following equality holds:
 \begin{equation}
 g(f_{2, \sigma}([p_x, p_y, p_z])) + [1, g_m(f_{2, \sigma}([p_x, p_y, p_z]))]T = [p_y, p_z]
 \end{equation}
where $T \in \mathbb{R}.$
\end{theorem}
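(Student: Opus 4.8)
The plan is to turn the chord family sketched in Fig.~\ref{fig:yz-projection} into explicit formulas by pushing the constant–latitude locus of the warp (\ref{main_transformation}) down to the $YZ$ plane. Once that family is in hand, ``the chord corresponding to a given $\phi$'' is a concrete line segment, $g$ and $g_m$ are just its midpoint and its slope, and the asserted identity says nothing more than that $[p_y,p_z]$ lies on the supporting line of that chord — which is exactly how $\phi=f_{2,\sigma}([p_x,p_y,p_z])$ is selected in the first place. So the genuine work is the derivation of the chord family; the incidence statement then follows by unwinding definitions.

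First I would take the warped point $(\hat x,\hat y,\hat z)$ written as a rational function of $(x,y,z,\sigma)$ — this is already displayed in the proof of Theorem~\ref{inv-therem} — and fix its latitude in the sense of (\ref{cartesian_spherical}), i.e. set $\hat z$ equal to the value $c(\phi)$ fixed by $\phi$, where $c(\phi)$ is the explicit factor determined by whichever convention labels the chords by $\phi$ (for instance $c(\phi)=\cos\phi$ when $\phi$ is read off directly as the warped colatitude, or the appropriate monotone reparametrisation when $\phi$ is taken by the angular–ratio rule described in the text). The key step is the elimination of $x$: on the unit sphere $x^2+z^2=1-y^2$, hence $x_\sigma^2+z_\sigma^2=\sigma^2(1+y)/(1-y)$ and $\delta=2(1-y)/\big((1-y)+\sigma^2(1+y)\big)$, so that $\hat z=2\sigma z/\big((1-y)+\sigma^2(1+y)\big)$ no longer contains $x$ at all. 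Clearing the denominator, the level set $\hat z=c(\phi)$ collapses to the single linear relation
\begin{equation}\label{chord-line}
(1-\sigma^2)\,c(\phi)\,p_y\;+\;2\sigma\,p_z\;=\;(1+\sigma^2)\,c(\phi).
\end{equation}
Thus the $YZ$–projection of the constant–$\phi$ surface lies on the line (\ref{chord-line}); for $\sigma\neq 0$ this line is never vertical, its slope is $g_m(\phi)=-(1-\sigma^2)c(\phi)/(2\sigma)$, and — since the foot of the perpendicular from the centre bisects any chord — its midpoint $g(\phi)$ is the foot of the perpendicular dropped from the origin onto (\ref{chord-line}). A direct check confirms that this $g(\phi)$ satisfies (\ref{chord-line}) and that the line's direction $\big(2\sigma,\,-(1-\sigma^2)c(\phi)\big)$ is proportional to $[1,g_m(\phi)]$.

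It then remains to invoke the definition of $f_{2,\sigma}$: it returns the unique $\phi\in[0,\pi]$ whose chord contains $[p_y,p_z]$, the uniqueness being precisely what the star–shaped restriction buys us (equivalently, (\ref{chord-line}) can be solved for $c(\phi)=2\sigma p_z/\big((1+\sigma^2)-(1-\sigma^2)p_y\big)$, a value that stays in the admissible range on the domain of interest). Substituting $\phi=f_{2,\sigma}([p_x,p_y,p_z])$ into (\ref{chord-line}) shows that $[p_y,p_z]$ lies on the line through $g(\phi)$ with direction $[1,g_m(\phi)]$, i.e. $g(\phi)+[1,g_m(\phi)]\,T=[p_y,p_z]$ with $T=p_y-(g(\phi))_1$, which is the claim. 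The main obstacle is not computational but a matter of bookkeeping: one must commit to a single convention relating $\phi$ to the warped latitude so that the factor $c(\phi)$ in (\ref{chord-line}) is pinned down, and then verify the range condition that keeps the selection of $\phi$ well defined. The step that looks hard — that projecting a curved, two–dimensional constant–latitude locus onto a plane yields an honest straight chord — is in fact the easy one, since $x$ disappears the moment $x^2+z^2=1-y^2$ is used.
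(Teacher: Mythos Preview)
Your argument is sound, and its logical skeleton matches the paper's: once the chord family in the $YZ$ plane is pinned down, the asserted identity is nothing more than the definition of $f_{2,\sigma}$ (pick the $\phi$ whose chord passes through $[p_y,p_z]$), and both you and the paper say exactly that at the end. Where you diverge is in how the chord family is obtained. The paper does not derive it from (\ref{main_transformation}) at all: it simply \emph{declares} the two chord endpoints on the unit circle via the angular--ratio rule,
\[
a=\Bigl[\cos\tfrac{2\sigma}{\pi}\phi,\ \sin\tfrac{2\sigma}{\pi}\phi\Bigr],\qquad
b=\Bigl[\cos\bigl(\pi+\tfrac{2(\pi-\sigma)}{\pi}\phi\bigr),\ -\sin\bigl(\pi+\tfrac{2(\pi-\sigma)}{\pi}\phi\bigr)\Bigr],
\]
then reads off $g(\phi)=(a+b)/2$ and $g_m(\phi)=(a_2-b_2)/(a_1-b_1)$, which sum--to--product identities collapse to $g(\phi)=\bigl[\sin\phi\,\sin(\phi-\tfrac{2\sigma}{\pi}\phi),\ \sin\phi\,\cos(\phi-\tfrac{2\sigma}{\pi}\phi)\bigr]$ and $g_m(\phi)=-\tan\bigl(\phi-\tfrac{2\sigma}{\pi}\phi\bigr)$. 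Note that in this section $\sigma$ is treated as an \emph{angle} (the half--aperture of the warped poles on the $YZ$ circle), not the dimensionless scale factor of (\ref{main_transformation}). Your route --- eliminating $x$ from the warp on the unit sphere to get the linear relation $(1-\sigma^2)c(\phi)\,p_y+2\sigma\,p_z=(1+\sigma^2)c(\phi)$ --- is longer but actually \emph{proves} the straight--chord phenomenon that the paper only reads off the figure; the cost is precisely the bookkeeping you flag, since your label $c(\phi)$ and your slope $-(1-\sigma^2)c(\phi)/(2\sigma)$ live in a different $\sigma$--convention from the paper's closed forms, and matching the two requires the reparametrisation you allude to but do not carry out.
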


\begin{proof}
    $$\displaystyle a = [\cos({\frac{2\sigma}{\pi}\phi}), \sin({\frac{2\sigma}{\pi}\phi})], \ \
    \displaystyle b = [\cos({\pi + \frac{2(\pi - \sigma)}{\pi}\phi}), \;-\sin({\pi + \frac{2(\pi - \sigma)}{\pi}\phi})]$$
    $$\displaystyle g(\phi) = \frac{a + b}{2}, \ \
    \displaystyle g_m(\phi) = \frac{a_2 - b_2}{a_1 - b_1}$$
    $$\displaystyle g(\phi) = [\sin(\phi)\sin(\phi - \frac{2\sigma}{\pi}\phi), \; \sin(\phi)\cos(\phi - \frac{2\sigma}{\pi}\phi)], \ \
    \displaystyle g_m(\phi) = -\tan({\phi - \frac{2\sigma}{\pi}\phi} )   $$
    $g(\phi) + [1, g_m(\phi)]T$ is a line formula of the projected latitude at $\phi$. Therefore, by the property of the $f_2$, we get:
    $$\displaystyle g(f_{2, \sigma}([p_x, p_y, p_z])) + [1, g_m(f_{2, \sigma}([p_x, p_y, p_z]))]T = [p_y, p_z]$$
    \qed
\end{proof}

\section{Application}
The visualization of hair on the human head presents a challenging problem due to the complex geometry of the three-dimensional model on which it grows. To be exact we are interested in a two-dimensional mesh spanned over the surface of the head that contains the anatomical hair areas. Each vertex of the mesh signalizes the place of the so-called guide hairs, which preserve the information about the structure of the hairstyle.

While processing those objects it is crucial to select the correct representation of the coordinate system. We have explored the difficulties that arise when attempting to create a realistic simulation of hair and discuss here a strategy for overcoming these challenges.

The most important feature that we want from our system is maintaining a~consistent representation of the mesh, as it allows us to define closed sub-areas on the 2D map. This ensures the simplicity of image processing algorithms used in generating different assets in further parts of the simulation. Additionally, consistency guarantees the unique identification of every triangle on the mesh using its three vertices, which wouldn't be true if we allow the wrapping of the sides at the seam lines. There are many more advantages to such representation, such as easier calculation of splines, better numerical accuracy, etc.

Unfortunately, standard approaches don't meet the highlighted criteria. For the hair mesh, always at least one of the points, $x$, or its polar counterpart, $x^o$, belongs to it. Therefore, any pole positioning would contain the singular point and what follows, the seam line would always intersect with the part of the mesh. To overcome these issues, it is necessary to use the proposed method of warping the poles of the spherical coordinate system.

By distorting the poles, we can position the singular points along with the seam line outside the hair mesh's domain. This enables all the benefits we talked about earlier with the very little computational cost involved.

Figures~\ref{fig:spherical} show the process of fitting the hair mesh into one cohesive area. This is achieved by rotating the poles and then distorting them by some angle $\sigma$ to move the eccentric vertices to this compact form seen in the picture. It's important to note that this method is easily replicable to other problems involving star-shaped three-dimensional models with the intent of processing the subset of its surface.

\begin{figure}[!ht]
    \centering
    \includegraphics[width=0.8\textwidth]{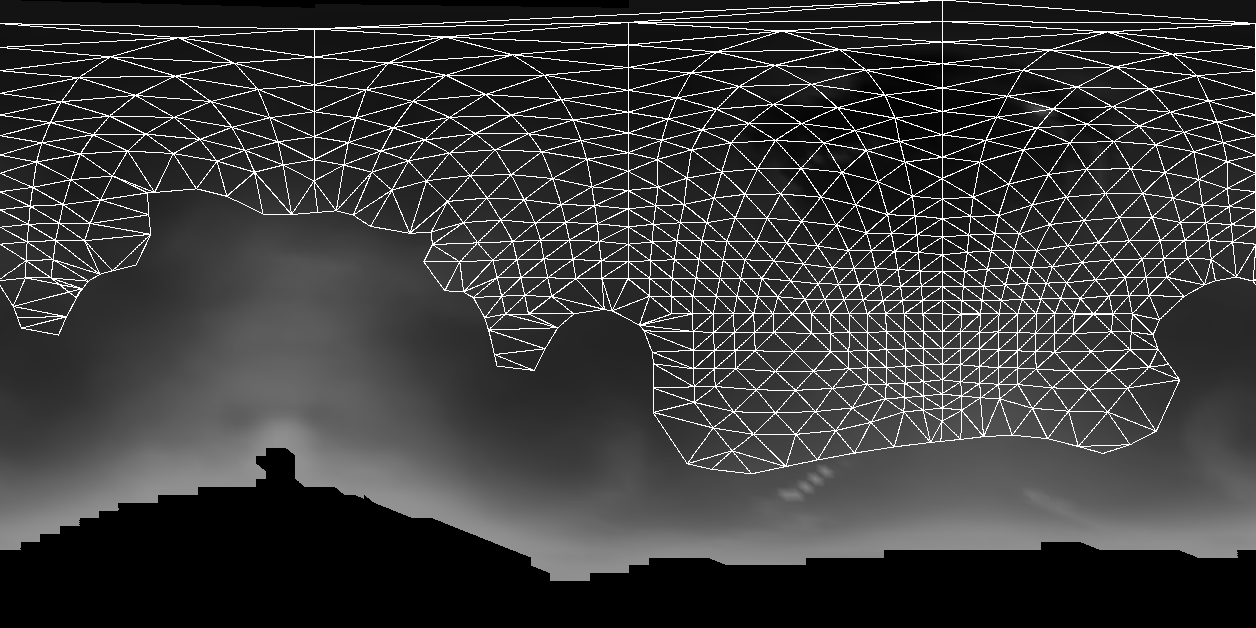}
    \includegraphics[width=0.8\textwidth]{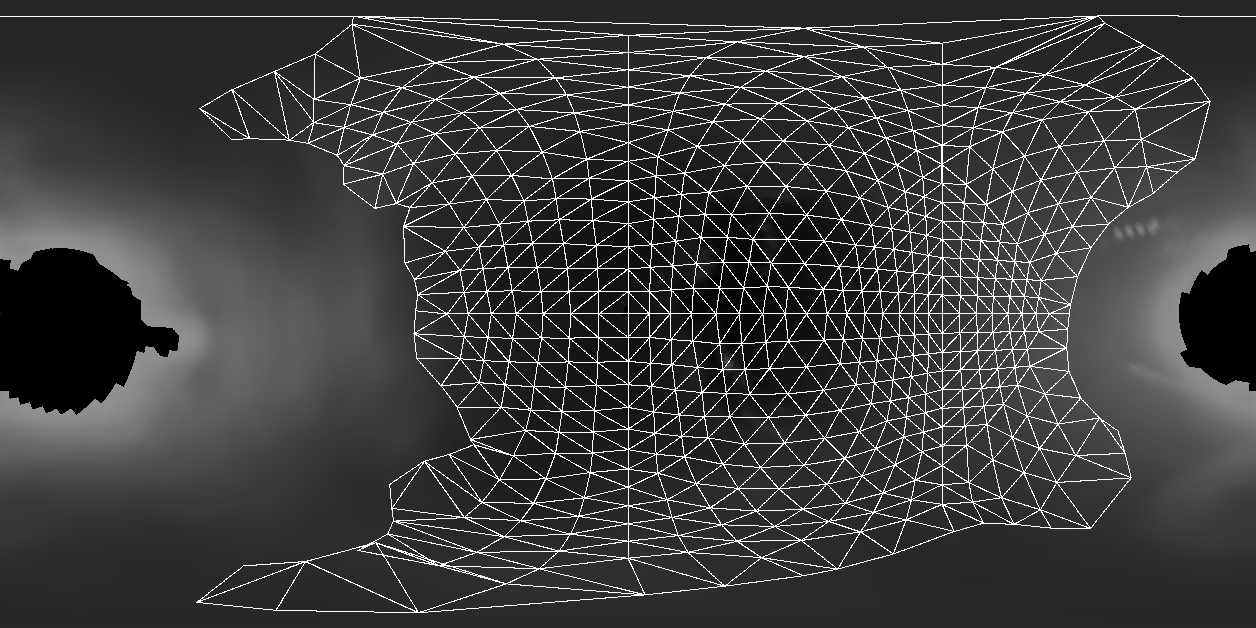}
    \includegraphics[width=0.8\textwidth]{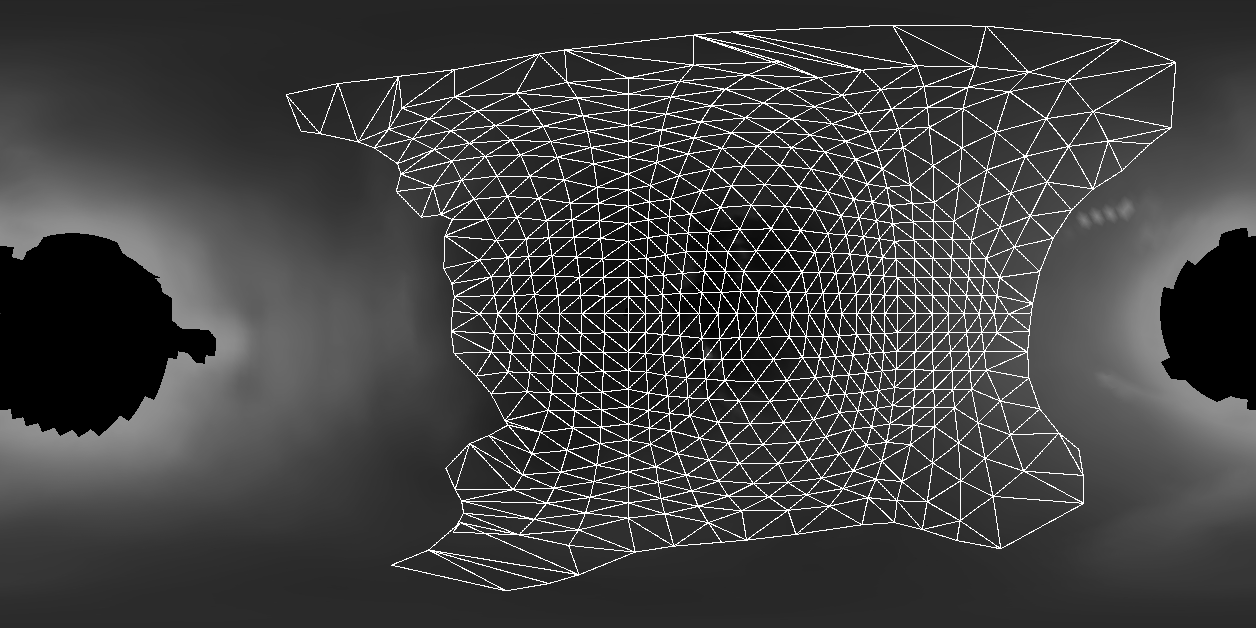}
    \caption{Consecutive transformations leading to the final mesh representation}
    \label{fig:spherical}
\end{figure}

\section{Discussion and conclusions}
The proposed methods can be extended to handle more complex shapes and mesh topologies. The methods can also be integrated into existing mesh processing frameworks, such as subdivision or smoothing algorithms, to improve their performance. The effectiveness of the proposed methods can also be compared to existing techniques to assess their benefits and limitations.

Additionally, the proposed methods can be extended to support other types of coordinate systems, such as cylindrical or polar coordinates. This would enable the representation of a wider range of shapes and improve the versatility of the mesh processing algorithms.

Furthermore, the proposed theta-phi mapping technique can be combined with other surface analysis methods, such as curvature estimation or surface parameterization, to provide a more comprehensive analysis of the mesh surface.

Another direction for future work is to investigate the impact of the proposed methods on the visual quality of the representation of the mesh. The distortion introduced to the poles of the spherical coordinate system may affect the visual appearance of the mesh, and it is important to evaluate the impact on the final result.

Finally, the proposed methods can be applied to fields beyond computer graphics, such as engineering or physics. For example, the angular measurements and the spherical coordinate system can be used to model the behavior of a~physical system that exhibits spherical symmetry.

In conclusion, the proposed methods for coordinate systems and theta-phi mapping provide a valuable contribution to the field of mesh processing and computer graphics. The methods can improve the efficiency and accuracy of various operations, while also enabling the representation of more complex shapes. Future work can extend the proposed methods to other coordinate systems, evaluate their visual impact, and apply them to other fields beyond computer graphics.

\subsubsection*{Acknowledgements.}
This work has been performed in a frame of the project entitled `Surgery planning and operational assistance system for effective hair transplant surgery', No. POIR.01.01.01-00-0597/20, financed by the funds of the National Centre for Research and Development (NCBiR) in Poland.

\bibliographystyle{splncs.bst}
\bibliography{biblio}

\end{document}